\def\bx{\mathbf x}
\def\bn{\mathbf n}
\def\by{\mathbf y}
\def\bG{\mathbf G}
\def\bH{\mathbf H}
\def\bI{\mathbf I}
\def\bY{\mathbf Y}
\def\adag{\alpha^\dag}
\def\rhomax{\rho_\text{max}}
\def\maximize{\mathop{\text{maximize}}}
\def\leref#1{Lemma~\ref{#1}}
\def\figref#1{Figure~\ref{#1}}
\newtheorem{lemma}{Lemma}
\newtheorem{theorem}{Theorem}
\title{Joint Optimization of Power Allocation  and Training Duration for
Uplink Multiuser\\ MIMO Communications}
\author{
\authorblockN{Songtao Lu and Zhengdao Wang}
\authorblockA{Department of Electrical and Computer Engineering,
    Iowa State University, Ames, IA 50011, USA\\
    Emails: \{songtao, zhengdao\}@iastate.edu}}
\begin{document}
\maketitle
\begin{abstract}
In this paper, we consider a multiuser multiple-input multiple-output
(MU-MIMO) communication system between a base station equipped with multiple
antennas and multiple mobile users each equipped with a single antenna. The
uplink scenario is considered. The uplink channels are acquired by the base
station through a training phase. Two linear processing schemes are
considered, namely maximum-ratio combining (MRC) and zero-forcing (ZF). We
optimize the training period and optimal training energy under the average and
peak power constraint so that an achievable sum rate is maximized.
\end{abstract}

\begin{keywords}
Power allocation, uplink, training, rate maximization, multiuser MIMO
\end{keywords}

\section{Introduction}
\label{sec:intro}

Multiuser Multiple-input multiple-output (MU-MIMO) systems are a type of
cellular communication where the base station is equipped with multiple
antennas. The base station serves multiple mobile stations that are usually
equipped with a small number of antennas, typically one. MU-MIMO holds good
potentials for improving future communication system performance. However,
inter-cell interference or intra-cell interference decreases the achievable
performance of adopting MU-MIMO such that complex techniques are involved
inevitably. Therefore, there is a tradeoff of allocating the resources for
MU-MIMO, such as the number of users which has been studied in \cite{miyo13},
the training and feedback optimization in terms of training period and power
is considered for downlink MIMO broadcast channel in \cite{koji11}.

Recently, massive MIMO technology has attracted much attention. In such a
system, a large number of antennas at the base station are employed to improve
the system performance. There is already a body of results in the literature
about the analysis and design of large MIMO systems; see e.g., the overview
article \cite{rpll13} and references therein. It is important to quantify the
achievable performance of such systems in realistic scenarios. For example,
channel state information (CSI) acquisition in uplink takes time, energy, and
channel estimation error will always exist. For MU-MIMO, both circuit and
transmission powers in uplink were considered when designing power allocation
schemes \cite{miao13}, where energy efficiency is optimized.

In this paper, we are interested in performance of the uplink transmission in
a single-cell system. In particular, we ask what rates can be achieved in the
uplink by the mobile users if we assume realistic channel estimation at the
base station. The achievable rates of uplink MU-MIMO system with maximum-ratio
combining (MRC) and zero-forcing (ZF) detection were derived in \cite{nglm13},
and the performance evaluation was discussed in \cite{homa13}. But the
analysis therein assumes equal power transmission during the channel training
phase and the data transmission phase, which is not optimal in the sense that
sum rate is optimized. In previous study, the power allocation between
training phase and data phase was investigated for MIMO case with optimizing
effective signal-to-noise (SNR) in \cite{haho03}, which shows the tradeoff of
energy splitting between the two phases such that the achievable rate can be
maximized. However, the peak power was not considered before. If the training period is limited and the accurate estimate is required, then the peak power will be very high since we need to spend enough energy on training phase. In this case, the optimal power allocation strategy is not practical.

In this paper, the power allocation and training duration are both optimized
for uplink MU-MIMO systems in a systematic way. Two linear receivers, MRC and
ZF, are adopted with imperfect CSI. The average and peak power constraints are
both incorporated. We analyze the convexity of this optimization problem, and
derive the optimal solution. The solution is in closed form except in one case
where a one-dimensional search of a quasi-concave function is needed.
Simulation results are also provided to demonstrate the benefit of optimized
training, compared to equal power allocation considered in the literature.

The main contribution of the work is that we provided a complete solution for
the optimal training duration and training energy in an uplink MU-MIMO system
with either MRC (or ZF) receiver, and with both peak and average power
constraints.

\section{MU MIMO System Model}
\label{sec:system}
Consider an uplink MU-MIMO system. The base station is equipped with an array
of $M$ antennas. There are $K$ mobile users, each with a single antenna. We
assume $M>K$. For a massive MU-MIMO system, we have typically $M\gg K$.

\subsection{Transmission Scheme}

We assume that the channel follows a block fading model such that it remains
constant during a block of $T$ symbols, and changes independently from block
to block. The uplink transmission consists of two phases: training phase and
data transmission phase. The training phase lasts for $T_\tau$ symbols, and
the data phase lasts for $T_d=T-T_\tau$ symbols. We assume that the mobile
users are synchronized.

\subsubsection{Training}

The training phase is used for the base station to acquire the CSI. During the
training phase, the $K$ users send time-orthogonal signals at power level
$\rho_\tau$ per user. The training signals can be represented as a $K\times
T_{\tau}$ matrix $\sqrt{T_{\tau}\rho_{\tau}}\mathbf{\Phi}$, where $\mathbf
\Phi$ satisfies $\mathbf{\Phi}\mathbf{\Phi}^H=\bI_K$. The received signal in
training phase is
\begin{equation}
\bY_{\tau}=\sqrt{T_{\tau}\rho_{\tau}}\bH\mathbf{\Phi}+\mathbf{N}
\end{equation}
where $\bH$ denotes the fast fading channel matrix, whose entries are i.i.d.\
random variables which follow Gaussian distribution with zero mean and unit
variance, i.e., $\mathcal{CN}(0,1)$; $\mathbf{N}$ is an $M\times T_{\tau}$
matrix with i.i.d. $\mathcal{CN}(0,1)$ elements that represent the additive
noise. The minimum mean-square error (MMSE) estimate
\begin{equation}
\widehat\bH = \frac{\sqrt{T_\tau\rho_\tau}}{T_\tau\rho_\tau+1} \bY_\tau
\mathbf{\Phi}^H
\end{equation}
will be used for demodulating the data symbols during the data transmission
phase. Note that we require $T_\tau\ge K$ to satisfy the time-orthogonality.

\subsubsection{Data Transmission}

In the data transmission phase, all users send signals with power $\rho_d$.
The received signal is
\begin{equation}\label{eq.received}
\by=\sqrt{\rho_d}\bH\bx+\bn
\end{equation}
where $\bx$ is a $K\times 1$ vector denoting the transmitted symbols, and
$\bn$ is a $M\times 1$ vector denoting the additive noise. We assume that the
noise distribution is i.i.d.\ $\mathcal{CN}(0,1)$.

\subsection{Linear Receivers}

Based on the model in \eqref{eq.received}, we will consider two linear
demodulation schemes: MRC and ZF receivers. The demodulated symbols can be
written as
\begin{equation}
\widehat{\bx}=\bG\by
\end{equation}
where $\bG$ is a $K\times M$ matrix that depends on the receiver type.

\subsubsection{MRC Receiver} For MRC receiver, we have $\bG=\widehat \bH^H$.
The signal-to-inference plus noise ratio (SINR) for any of the $K$ users'
symbols can be obtained in the same way as in, e.g., \cite[eq.~(39)]{nglm13}
as
\begin{equation}\label{eq.mrcsinr}
\textsf{SINR}^{\text{MRC}}=
  \frac{T_{\tau}\rho_{\tau}\rho_d(M-1)}
    {T_{\tau}\rho_{\tau}\rho_d(K-1)+K\rho_d+T_{\tau}\rho_{\tau}+1}.
\end{equation}

\subsubsection{ZF Receiver} For ZF receiver, $\bG=\frac 1{\sqrt{\rho_d}}
(\widehat{\bH}^{H}\widehat{\bH})^{-1}\widehat{\bH}^H$. The expected value of
SINR for any of the $K$ users' symbols can be obtained in the same way as in,
e.g., \cite[eq.~(42)]{nglm13} as
\begin{equation}\label{eq.zfrate}
\textsf{SINR}^{\text{ZF}}=
  \frac{T_{\tau}\rho_{\tau}\rho_d(M-K)}{K\rho_d+T_{\tau}\rho_{\tau}+1}.
\end{equation}
For either receiver, a lower bound on the sum rate achieved by the $K$ users
is given by
\begin{equation}\label{eq.R}
R^{\mathcal{A}}(\alpha,T_d)=
  \frac{T_d}{T}K\log_2(1+\textsf{SINR}^{\mathcal{A}})
\end{equation}
where $\mathcal{A}\in \{\text{MRC}, \text{ZF}\}$.

\subsection{Power Allocation} We assume that the transmitters are subject to
both peak and average power constraints.

\subsubsection{Average Power Constraint} We assume the average transmitted
power over one coherence interval $T$ is equal to a given constant $\rho$,
namely
\(
\rho_dT_d+\rho_{\tau}T_{\tau}=\rho T \label{eq.ener}
\).
Let $\alpha:= \rho_\tau T_\tau/(\rho T)$ denote the fraction of the total
transmit energy that is devoted to channel training; i.e.,
\begin{equation} \label{eq.optfra}
\rho_{\tau}T_{\tau}=\alpha\rho T, \quad \rho_dT_d=(1-\alpha)\rho T,
\quad 0\le \alpha\le 1.
\end{equation}

\subsubsection{Peak Power Constraint}

The peak power during the transmission is assumed to be no more than
$\rhomax$; i.e.,
\begin{equation}\label{eq.oricon1}
 0\le\rho_d, \rho_{\tau}\le \rhomax.
\end{equation}

\subsection{Optimization Problem}
\def\objfunc#1{R^{#1}(\alpha,T_d)}
For an adopted receiver, $\mathcal{A}\in \{\text{MRC}, \text{ZF}\}$, our goal
is to maximize the uplink achievable rate subject to the peak and average
power constraints. That is,
\begin{align}\label{eq.oripro}
\maximize_{\alpha,T_d}\quad&\objfunc{\mathcal A} \\
\text{subject to}\quad& T_d+T_{\tau}=T \\
& \rho T\alpha+\rhomax T_d\le\rhomax T \label{eq.alcon1} \\
&  -\rho T\alpha-\rhomax T_d\le-\rho T \label{eq.alcon2}  \\
& \quad 0\le \alpha\le 1 \label{eq.alconalpha} \\
& 0<T_d\le T-K  \label{eq.tdconstr}
\end{align}
where $\objfunc{\mathcal A}$ is as given in \eqref{eq.R}; \eqref{eq.alcon1}
and \eqref{eq.alcon2} are from the peak power constraints in the training and
data phases, respectively; and the last constraint is from the requirement
that $T_\tau \ge K$.

\section{SINR Maximization with $\alpha$ for Fixed $T_d$}

The feasible set of the optimization problem
\eqref{eq.oripro}--\eqref{eq.tdconstr} is convex, but the convexity of the
objective function is not obvious. In this section, we consider the
optimization problem when $T_d$ is fixed. In this case, we will prove that
$R^{\mathcal{A}}(\alpha,T_d)$ is concave in $\alpha$, and derive the optimized
$\alpha$. The result will be useful in the next section that $\alpha$ and
$T_d$ are jointly optimized.

For a fixed $T_d$, from the peak power constraints \eqref{eq.alcon1} and
\eqref{eq.alcon2}, we have
\begin{equation}\label{eq.solcon}
\frac{\rhomax T_{\tau}}{\rho T}+\left(1-\frac{\rhomax}{\rho}\right)\le\alpha\le\frac{\rhomax T_{\tau}}{\rho T}.
\end{equation}
Combined with \eqref{eq.alconalpha}, the overall constraints on $\alpha$ is
\begin{equation}\label{eq.solcon2}
\min\{0,\frac{\rhomax T_{\tau}}{\rho T}+\left(1-\frac{\rhomax}{\rho}\right)\}\le\alpha\le\max\{\frac{\rhomax T_{\tau}}{\rho T},1\}.
\end{equation}
In the remaining part of this section, we will first ignore the peak power
constraint, and derive the optimal $\alpha\in(0,1)$ for a given $T_d$. At the
end of this section, we will reconsider the effect of the peak power
constraint on the optimal $\alpha$.

\subsection{MRC Case without peak power constraint}\label{sec.mrc}

Using \eqref{eq.optfra} we can rewrite \eqref{eq.mrcsinr} as
\begin{equation}\label{eq.newmrcsinr}
\textsf{SINR}^\text{MRC}(\alpha)=\frac{M-1}{K-1}
  \frac{\alpha(\alpha-1)}{\alpha^2-a_1\alpha-b_1}
\end{equation}
where
\begin{equation}\label{eq.defab}
a_1=1+\frac{T_d-K}{\rho T(K-1)},\quad b_1=\frac{\rho T K+T_d}{\rho^2T^2(K-1)}>0.
\end{equation}
It can be verified that $1-a_1-b_1\le0$.

\subsubsection{Behavior of the $\textsf{SINR}^\text{MRC}(\alpha)$ function}

Define
\begin{equation}
g(\alpha):=\textsf{SINR}^\text{MRC}\cdot(K-1)/(M-1).
\end{equation}
And let $g_d(\alpha)=\alpha^2-a_1 \alpha -b_1$, which is the denominator of
$g(\alpha)$.

\emph{Remark 1}: It can be observed that when $1-a_1-b_1\le0$ and $b_1>0$,
$g_d(\alpha)$ is negative at both $\alpha=0$ and $\alpha=1$. Since the leading
coefficient of $g_d(\alpha)$ is positive, $g_d(\alpha)<0$ for
$\alpha\in(0,1)$, and it has no root in $(0,1)$.

\begin{lemma}\label{le.1}
The function $g(\alpha)$ is concave in $\alpha$ over $(0,1)$ when
$1-a_1-b_1\le0$ and $b_1>0$.
\end{lemma}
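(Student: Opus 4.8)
The plan is to establish concavity directly by showing the second derivative $g''(\alpha)$ is nonpositive on $(0,1)$. Writing $g=n/g_d$ with numerator $n(\alpha)=\alpha^2-\alpha$ and denominator $g_d(\alpha)=\alpha^2-a_1\alpha-b_1$, I would apply the quotient rule twice. The key simplification is that $n$ and $g_d$ are both monic quadratics, so $n''=g_d''=2$; this cancels several cross terms and leaves a compact expression. Carrying out the routine algebra, I expect to reach
\[
g''(\alpha)=\frac{2P(\alpha)}{g_d(\alpha)^3},\qquad P(\alpha):=(a_1-1)\alpha^3+3b_1\alpha^2-3b_1\alpha+b_1(a_1+b_1).
\]

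By Remark~1, under the hypotheses $1-a_1-b_1\le0$ and $b_1>0$ the denominator satisfies $g_d(\alpha)<0$ on $(0,1)$, hence $g_d(\alpha)^3<0$ there. Thus the sign of $g''$ is opposite to that of $P$, and proving concavity reduces to showing $P(\alpha)\ge0$ for $\alpha\in(0,1)$.

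The crux, and the step I expect to be the main obstacle, is that $P$ is a cubic whose leading coefficient $a_1-1$ may be of either sign, so no conclusion about $P$ is immediate. The resolution I would use is the regrouping
\[
P(\alpha)=(a_1-1)(\alpha^3+b_1)+b_1\bigl(3\alpha^2-3\alpha+1+b_1\bigr),
\]
verified by expansion. The second bracket is uniformly positive since $3\alpha^2-3\alpha+1\ge\tfrac14$ and $b_1>0$. For the first term, on $(0,1)$ we have $\alpha^3+b_1>0$, so if $a_1\ge1$ both pieces are nonnegative and $P>0$ at once. If instead $a_1<1$, the hypothesis $1-a_1-b_1\le0$ gives $a_1-1\ge-b_1$, and multiplying by $\alpha^3+b_1>0$ yields
\[
P(\alpha)\ge -b_1(\alpha^3+b_1)+b_1\bigl(3\alpha^2-3\alpha+1+b_1\bigr)=b_1\bigl(-\alpha^3+3\alpha^2-3\alpha+1\bigr)=b_1(1-\alpha)^3>0.
\]

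The recognition that $-\alpha^3+3\alpha^2-3\alpha+1=(1-\alpha)^3$ is what makes the bound clean; with it, $P>0$ throughout $(0,1)$ in every case, so $g''\le0$ and $g$ is concave. An alternative route would be partial fractions of $g-1=\bigl((a_1-1)\alpha+b_1\bigr)/g_d$ using the real roots of $g_d$ (one negative, one at least $1$, by Remark~1) and differentiating the resulting simple poles; but the sign bookkeeping on the residues looks messier than the single cubic inequality above, so I would prefer the direct computation.
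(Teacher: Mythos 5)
Your proof is correct, and it arrives at exactly the same intermediate object as the paper: both compute $g''(\alpha)=2P(\alpha)/g_d(\alpha)^3$ with the same cubic $P(\alpha)=(a_1-1)\alpha^3+3b_1\alpha^2-3b_1\alpha+b_1(a_1+b_1)$ (the paper calls it $f(x)$), and both invoke Remark~1 to get $g_d^3<0$ on $(0,1)$, reducing concavity to positivity of that cubic. Where you genuinely diverge is in how positivity is established. The paper proceeds by calculus: it evaluates $f$ at the endpoints, studies $f'$, solves explicitly for the interior minimizer $x^*$, substitutes it back into $f$, and then runs a three-way case analysis ($a_1=1$, $a_1>1$, $a_1<1$) with delicate sign bookkeeping on the resulting expression. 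You instead prove positivity by pure algebra: the regrouping $P=(a_1-1)(\alpha^3+b_1)+b_1\left(3\alpha^2-3\alpha+1+b_1\right)$, combined with $3\alpha^2-3\alpha+1\ge\tfrac14$, disposes of the case $a_1\ge 1$ immediately, while in the case $a_1<1$ the hypothesis $a_1-1\ge-b_1$ and the identity $-\alpha^3+3\alpha^2-3\alpha+1=(1-\alpha)^3$ give the clean bound $P(\alpha)\ge b_1(1-\alpha)^3>0$. Your route buys brevity and robustness: no critical-point computation, two one-line cases instead of three multi-step ones, and an explicit lower bound on $P$ as a bonus; the paper's route additionally locates where the cubic is minimized, but at the cost of heavier algebra (its write-up in fact contains several typos precisely in those manipulations, e.g.\ in the evaluation of $f(1)$ and in the expression it calls $h(x)$), which your argument sidesteps entirely.
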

\begin{proof}
See Appendix A.
\end{proof}
According to \leref{le.1}, we know that there is a global maximal point for
\eqref{eq.newmrcsinr}. Take the derivative of \eqref{eq.newmrcsinr} and set it
as 0, we have
\begin{equation}\label{eq.equation}
(1-a_1)\alpha^2-2b_1\alpha+b_1=0.
\end{equation}

Based on Remark 1, we deduce that $g(\alpha)>0$ for $\alpha\in (0, 1)$. In
addition, we have $g(0)=0$ and $g(1)=0$. Therefore, there is an optimal
$\alpha$ within $(0,1)$ rather than at boundaries.

\subsubsection{The optimizing $\alpha$} We discuss the optimal $\alpha$ in
three cases, depending on $T_d$, as compared to $K$.
\begin{enumerate}
\item If $T_d=K$, then $1-a_1=0$. Hence, we have $\alpha^*=1/2$, and
\begin{equation}
\textsf{SINR}^{\text{MRC}}(\frac{1}{2})=\frac{M-1}{K-1}\frac{1/4}{1/4+\frac{K(\rho T+1)}{\rho^2T^2(K-1)}}
\end{equation}

\item If $T_d<K$, then $1-a_1>0$. Since $b_1>1-a_1$, $b_1/(1-a_1)>1$. Between
the two roots of \eqref{eq.equation}, the one in between 0 and 1 is
\begin{equation}\label{eq.alphale}
\alpha^*=\frac{b_1-\sqrt{b_1(a_1+b_1-1)}}{1-a_1}.
\end{equation}

\item If $T_d>K$, then $1-a_1<0$. It can be deduced that in this case
$\alpha^*$ in \eqref{eq.alphale} is still between 0 and 1 and therefore is the
optimal $\alpha$.
\end{enumerate}

Substituting \eqref{eq.defab} into \eqref{eq.alphale}, we have
\begin{equation}\label{eq.astar}
\alpha^*=\frac{\sqrt{(\rho TK+T_d)(\rho TT_d+T_d)}-(\rho TK+T_d)}{\rho T(T_d-K)}.
\end{equation}
We can simplify the expression for the optimal $\alpha$ at high and low SNR:
\begin{enumerate}
\item At high SNR, the optimal $\alpha^*$ is
\begin{equation}
\alpha^*_{\rm H}\approx \frac{\sqrt{KT_d}-K}{T_d-K}.
\end{equation}

\item Similarly, at low SNR, the optimal $\alpha^*$ is
\begin{align}
\alpha^*_{\rm L}\approx\frac{1}{2}.
\end{align}
As a result, $\textsf{SINR}^{\text{MRC}}(\alpha^*_{\rm L})=(M-1)/(4T_d(K-1))$.
If the SNR is low, the fraction between the training and data is independent
on the system parameters $M$, $K$, $\rho_d$, $\rho_{\tau}$, $T_\tau$, and $T$.
\end{enumerate}

\subsection{ZF Case without peak power constraint}

This optimization problem in the ZF case is similar to that in \cite[eq.
22]{haho03}, which maximizes the effective SNR for MIMO system with MMSE
receiver. Here, we only give the final optimization results.

\subsubsection{The SINR function}

Using \eqref{eq.optfra} we can rewrite \eqref{eq.zfrate} as
\begin{equation} \label{eq.zfsinr}
\textsf{SINR}^{\text{ZF}}(\alpha)=\frac{T\rho(M-K)\alpha(1-\alpha)}{(T_d-K)(\gamma+\alpha)}
\end{equation}
where $\gamma=\frac{K\rho T+T_d}{\rho T(T_d-K)}$. The second derivative of
\eqref{eq.zfsinr} is
\begin{equation}
-2T\rho(M-K)\gamma(\gamma+1)/[(T_d-K)(\alpha+\gamma)^3],
\end{equation}
which we will consider to decide the convexity of the objective function. It
can be verified that in all the three cases, namely $T_d=K$, $T_d>K$, and
$T_d<K$, $\textsf{SINR}^{\text{ZF}}(\alpha)$ is concave in $\alpha$ within
$\alpha\in(0,1)$.

\subsubsection{The optimizing $\alpha^*$}

Taking the first derivative of \eqref{eq.zfsinr} and set to 0, we can obtain
the optimal $\alpha^*$:
\begin{enumerate}
\item When $T_d=K$, $\alpha^*=1/2$.

\item When $T_d>K$, $\alpha^*=-\gamma+\sqrt{\gamma(\gamma+1)}$.

\item When $T_d<K$, $\alpha^*=-\gamma-\sqrt{\gamma(\gamma+1)}$.
\end{enumerate}
We can simplify the expression for the optimal $\alpha$ at high and low SNR:
\begin{enumerate}
\item At high SNR, $\gamma=K/(T_d-K)$.

\item At low SNR, $\gamma=T_d/(\rho T(T_d-K))$, $\alpha^*_{\rm L}=1/2$ which
is consistent with the MRC case.
\end{enumerate}

\subsection{MRC and ZF with peak power constraint}

So far we have ignored the peak power constraint. When the peak power is
considered, and $\alpha^*$ is not within the feasible set \eqref{eq.solcon2},
the optimal $\widetilde{\alpha}^*$ with the peak power constraint is the
$\alpha$ within the feasible set that is closest to the $\alpha^*$ we derived,
which is at one of the two boundaries of the feasible set, due to the
concavity of the objective function.

\section{Achievable Rate Maximization with $\alpha$ and $T_d$}

In this section, $\alpha$ and $T_d$ are jointly optimized for maximizing the
achievable rate of uplink MU-MIMO system as illustrated in
\eqref{eq.oripro}--\eqref{eq.tdconstr} when both average and peak power
constraints are considered.

The feasible set with respective to $\alpha$ and $T_d$ is illustrated in
Fig.~1. It can be observed that the feasible region is in between the
following two lines
\begin{eqnarray}
T_d&=&-\rho T\alpha/\rhomax+T, \label{eq.slab1} \\ T_d&=&-\rho
T\alpha/\rhomax+\rho T/\rhomax \label{eq.slab2}
\end{eqnarray}
where $\alpha$ and $T_d$ satisfy \eqref{eq.alconalpha} and
\eqref{eq.tdconstr}.
\begin{figure}[htp]
\centering \includegraphics[width=\linewidth]{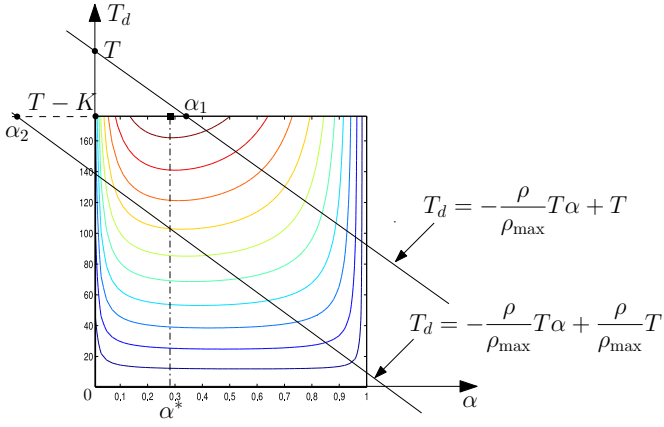}
\caption{Feasible region and the contour of the objective function in the MRC
case; $T=196$, $K=20$ and $M=50$.}
\label{fig.feas}
\end{figure}

We have the following lemma that is useful for describing the behavior of our
objective function $\textsf{SINR}^{\mathcal A}(\alpha, T_d)$ when $\alpha$ is
fixed.

\begin{lemma} \label{lemma.2}
The function $f(x)=x\ln(1+a/(b+cx))$, when $a,b,c,x>0$, is concave and
monotonically increasing.
\end{lemma}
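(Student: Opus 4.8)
The plan is to establish both properties by direct differentiation of $f(x)=x\ln(1+a/(b+cx))$, treating $a,b,c$ as fixed positive constants and $x>0$. To keep the algebra manageable I would abbreviate $u:=b+cx$, so that $u>0$ on the whole domain and $du/dx=c$; with this notation $f(x)=x\bigl[\ln(u+a)-\ln u\bigr]$.

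First I would handle monotonicity. Differentiating gives
\begin{equation}
f'(x)=\ln\!\left(1+\frac{a}{u}\right)-\frac{acx}{u(u+a)}.
\end{equation}
The key step is to bound the logarithm from below using the elementary inequality $\ln(1+s)>s/(1+s)$, valid for $s>0$; taking $s=a/u$ yields $\ln(1+a/u)>a/(u+a)$. It then suffices to verify the cruder bound $a/(u+a)\ge acx/[u(u+a)]$, which after clearing the positive denominator reduces to $u\ge cx$, i.e.\ $b\ge 0$. Since $b>0$ this holds, so $f'(x)>0$ and $f$ is monotonically increasing.

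For concavity I would differentiate a second time. After writing both terms of $f'$ over the common denominator $[u(u+a)]^2$ and using $cx=u-b$ to eliminate the explicit $x$, the numerator simplifies considerably and I expect to arrive at
\begin{equation}
f''(x)=-\frac{ac\bigl[(a+2b)u+ab\bigr]}{[u(u+a)]^2}.
\end{equation}
Since $a,b,c>0$ and $u>0$, every factor in the bracket and in the denominator is positive, so $f''(x)<0$ on the domain, establishing strict concavity.

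The main obstacle I anticipate is the bookkeeping in the second derivative rather than any conceptual difficulty: the raw expression for $f''$ is a difference of two rational terms of different shape, and its sign is not apparent until the pieces are combined. The substitution $cx=u-b$ is what cancels the indefinite-sign terms and leaves a manifestly negative result, so carefully tracking signs through that cancellation is where an error would be most likely.
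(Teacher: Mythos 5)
Your proof is correct, and it splits into one part that matches the paper and one that is genuinely different. The concavity half is the same computation as the paper's: your expression $f''(x)=-\frac{ac\,[(a+2b)u+ab]}{[u(u+a)]^2}$ with $u=b+cx$ expands to exactly the paper's numerator $a^2c^2x+2abc^2x+2abc(a+b)$ over the denominator $[(cx+b)(cx+a+b)]^2$, so both establish $f''<0$ by the same route (your substitution $cx=u-b$ just makes the sign manifest more cleanly than the paper's fully expanded form). The monotonicity half is where you diverge: the paper never bounds $f'$ directly; it observes $\lim_{x\to\infty}f'(x)=0$, uses $f''<0$ to conclude $f'$ is strictly decreasing, and hence deduces $f'(x)>0$ for all finite $x>0$. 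That argument is economical — one computation ($f''<0$) yields both conclusions — but it leans on an asymptotic limit. Your argument instead uses the elementary inequality $\ln(1+s)>s/(1+s)$ for $s>0$ with $s=a/u$, reducing positivity of $f'$ to the inequality $u\ge cx$, i.e.\ $b\ge 0$, which holds by hypothesis. This is pointwise, self-contained, and decouples monotonicity from concavity entirely (you would still have an increasing function even if the concavity computation were in doubt). Both proofs are valid; yours trades the paper's economy for a more elementary and logically independent treatment of the two claims.
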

\begin{proof}
See Appendix B.
\end{proof}

In summary, the convexities of the objective function are known to have the
following two properties:
\begin{enumerate}
\item [(\textbf{P1})] From \leref{le.1}, for fixed $T_d$, $R^{\mathcal{A}}$ is
a concave function with respect to $\alpha$.

\item [(\textbf{P2})] From \leref{lemma.2}, for fixed $\alpha$,
$R^{\mathcal{A}}$ is a concave function and monotonically increasing with
respect to $T_d$.
\end{enumerate}
Since the feasible set is convex, our optimization problem
\eqref{eq.oripro}--\eqref{eq.tdconstr} is a biconvex problem that may include
multiple local optimal solutions. However, after studying the convexities of
the objective function, there are only three possible cases for the optimal
solutions, as we discuss below.

In the remainder of this section, let $\adag$ denote the optimal $\alpha$ when
$T_d=T-K$.

\subsection{Case 1: $\rho_{\tau}$ is limited by $\rhomax$}

Define $\alpha_1:=\rhomax K/\rho T$, which is the root of $T-K=-\rho
T\alpha/\rhomax+T$ in $\alpha$; see \figref{fig.feas}. In the case where
$\alpha_1<\adag$, because of the property P2 the optimal $(\alpha^*, T_d^*)$
must be on one of the two lines given by i) $T_d=-\rho T\alpha/\rhomax+T$,
$\alpha\in [\alpha_1, 1]$, and ii) $T_d=T-K$, $\alpha \in [0, \alpha_1]$.

On the line $T_d=T-K, \alpha\in[0,\alpha_1]$ the objective function is concave
and increasing with $\alpha$, thanks to property P1. Hence, we only need to
consider the line $T_d=-\rho T\alpha/\rhomax+T, \alpha\in[\alpha_1,1]$.

\begin{lemma}\label{le.quasi}
The objective function $\objfunc{\text{MRC}}$ along the line $T_d=-\rho
T\alpha/\rhomax+T, \alpha\in[\alpha_1,1]$ is quasiconcave in $\alpha$.
\end{lemma}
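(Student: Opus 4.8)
The plan is to collapse the two-variable objective to a single variable and show the resulting profile is unimodal. Along the line $T_d=-\rho T\alpha/\rhomax+T$ the peak-power constraint \eqref{eq.alcon1} in the training phase is active, so $\rho_\tau=\rhomax$ is \emph{constant}; only $T_d$ (equivalently $T_\tau=T-T_d$) and the data power $\rho_d=(\rho T-\rhomax T_\tau)/T_d$ vary with $\alpha$. Substituting the line equation into \eqref{eq.mrcsinr} and \eqref{eq.R} gives a one-variable function $h(\alpha):=\objfunc{\text{MRC}}$ on $[\alpha_1,1]$ in which both the prefactor $T_d(\alpha)$ and the SINR are explicit rational functions of $\alpha$. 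Writing $L(\alpha):=\ln\!\big(1+\textsf{SINR}^{\text{MRC}}(\alpha)\big)$ and $\kappa:=K/(T\ln 2)>0$, we have $h=\kappa\,T_d(\alpha)L(\alpha)$; the endpoints are anchored, since at $\alpha=1$ all energy is spent on training, giving $\rho_d=0$, $\textsf{SINR}^{\text{MRC}}=0$, and $h(1)=0$, whereas $h(\alpha_1)>0$ (there $T_d=T-K>0$), so $h>0$ on $[\alpha_1,1)$.

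For a $C^2$ function on an interval, quasiconcavity is equivalent to unimodality, and it suffices to show every interior stationary point is a strict local maximum: if $h'(\alpha_0)=0$ forces $h''(\alpha_0)<0$, then $h$ cannot have two stationary points, because two strict local maxima would force an intervening interior local minimum — itself a stationary point with $h''\ge 0$, contradicting the implication. I would use this test. Since $T_d(\alpha)$ is affine with slope $-c$, where $c:=\rho T/\rhomax>0$, we get $h'=\kappa(-cL+T_dL')$ and $h''=\kappa(-2cL'+T_dL'')$. At a stationary point the first-order condition reads $T_dL'=cL$, i.e.\ $L'=cL/T_d$; substituting this into $h''$ eliminates $L'$ and yields $h''=\kappa\,(T_dL''-2c^2L/T_d)$, so $h''(\alpha_0)<0$ is equivalent to the single inequality $T_d^2\,L''(\alpha_0)<2c^2L(\alpha_0)$.

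Because $L>0$ and $c>0$, a clean sufficient condition is that $L(\alpha)$ be concave along the line, making the left side nonpositive; I would attempt this first, in the spirit of \leref{le.1}. The difficulty is that \leref{le.1} fixes $T_d$, whereas here $T_d$ co-varies with $\alpha$, so its argument does not transfer and $L$ need not be concave in general. The crux is therefore to establish $T_d^2L''<2c^2L$ \emph{only at stationary points}. To do so I would use the first-order relation in the form $c\,L=T_d\,S'/(1+S)$, with $S:=\textsf{SINR}^{\text{MRC}}$, to eliminate the transcendental term $L=\ln(1+S)$; writing $L''=S''/(1+S)-(S')^2/(1+S)^2$ and substituting, the test collapses to $T_d\big[S''(1+S)-(S')^2\big]<2c(1+S)S'$, a rational inequality in $\alpha$ since $S$ is rational in $\alpha$ along the line. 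Clearing the positive denominator (the SINR denominator stays positive on $[\alpha_1,1]$) then leaves a polynomial inequality, and signing that polynomial over $[\alpha_1,1]$ is the last and most laborious step. Once at-most-one stationary point is secured, unimodality — hence quasiconcavity — follows, and the boundary data $h(1)=0$, $h(\alpha_1)>0$ merely locate the maximizer.
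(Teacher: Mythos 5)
Your one-variable reduction and the stationary-point test for unimodality are sound in principle, but the proposal has a genuine gap: the decisive inequality $T_d^2 L''(\alpha_0)<2c^2L(\alpha_0)$ at stationary points---equivalently, the polynomial inequality you obtain after clearing denominators---is never established. You explicitly defer ``signing that polynomial over $[\alpha_1,1]$'' as the last and most laborious step, but that step \emph{is} the content of the lemma; without it you have a plan, not a proof. (Even with it, you would need the inequality to hold strictly at every stationary point, since the ``two strict maxima force an interior minimum'' argument breaks down if $h''=0$ is allowed there.)

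Moreover, the cul-de-sac is self-inflicted: your reason for abandoning the easy route---that \leref{le.1} ``fixes $T_d$'' and so ``does not transfer,'' leaving $L$ possibly non-concave---is mistaken. \leref{le.1} is a statement about \emph{any} rational function of the form $\alpha(\alpha-1)/(\alpha^2-a\alpha-b)$ with $b>0$ and $1-a-b\le 0$, not about the SINR with $T_d$ frozen. If you substitute the affine $T_d(\alpha)=-\rho T\alpha/\rhomax+T$ into \eqref{eq.mrcsinr} and multiply numerator and denominator by $T_d(\alpha)$, you get exactly such a form, namely \eqref{eq.sinrmrcq} with $a_2,c_2>0$ and $a_2-b_2-c_2=(\rho T^2+T)(\rho/\rhomax-1)\le 0$ because $\rho\le\rhomax$. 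Hence \leref{le.1} applies verbatim along the line: $\textsf{SINR}^{\text{MRC}}(\alpha)$ is concave there, and so is $L=\log_2(1+\textsf{SINR}^{\text{MRC}})$ by composition with the concave nondecreasing $\log$. This is precisely the paper's argument. With $L$ concave and positive, your own test would close immediately, since $T_d^2L''\le 0<2c^2L$ at any stationary point; the paper instead finishes without differentiating the product at all, via the superlevel-set characterization: $R^{\text{MRC}}\ge\beta$ iff $L(\alpha)+\beta/\bigl[\tfrac KT(\rho T\alpha/\rhomax-T)\bigr]\ge 0$, and since $\beta$ divided by a negative affine function is concave, the left-hand side is a sum of concave functions, so every superlevel set is convex and $R^{\text{MRC}}$ is quasiconcave.
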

\begin{proof}
Consider MRC processing. Substituting \eqref{eq.slab1} into
$\objfunc{\text{MRC}}$, we have
\begin{equation}\label{eq.conmrcrate}
R^{\text{MRC}}(\alpha)=
  \frac{K}{T}\left(-\frac{\rho T}{\rhomax}\alpha+T\right)
  \log_2(1+\textsf{SINR}^{\text{MRC}})
\end{equation}
where
\begin{equation}\label{eq.sinrmrcq}
\textsf{SINR}^{\text{MRC}}(\alpha)=
  \frac{\alpha(\alpha-1)\rho^2T^2(M-1)}{a_2\alpha^2-b_2\alpha-c_2},
\end{equation}
and $a_2=\rho^2T^2(K-1)+\rho^2T^2/\rhomax$, $b_2=\rho^2T^2(K-1)+\rho T^2-\rho
TK-\rho T\alpha/\rhomax$ and $c_2=K\rho T+T$. Since
$R^{\text{MRC}}(\alpha)>0$, in order to prove the quasi-concavity of
$R^{\text{MRC}}(\alpha)$, we need to prove that the super-level set
$\mathcal{S}_{\beta}=\{\alpha|0<\alpha<1, R^{\text{MRC}}(\alpha)\ge\beta\}$
for each $\beta\in\mathbb{R}^{+}$ is convex. Equivalently, if we define
\begin{equation}\label{eq.quasiproof}
\phi_{\beta}(\alpha)=
  \frac{\beta}{\frac{K}{T}(\frac{\rho T\alpha}{\rhomax}-T)}
    +\log_2(1+\textsf{SINR}^{\text{MRC}}(\alpha)).
\end{equation}
we only need to prove that
$\mathcal{S}_{\phi}=\{\alpha|0<\alpha<1,\phi_{\beta}(\alpha)\ge0\}$ is a
convex set.

It can be checked that the first part of $\phi_\beta(\alpha)$, namely
$\beta/[{\frac{K}{T}(\frac{\rho T\alpha}{\rhomax}-T)}]$, is concave for
$\alpha\in [0, 1]$. For the other part of $\phi_{\beta}(\alpha)$, from
\eqref{eq.sinrmrcq} we know that
\begin{equation}
a_2-b_2-c_2=\rho T(\frac{\rho}{\rhomax}-1)-T(1-\alpha\frac{\rho}{\rhomax})<0
\end{equation}
where $a_2,c_2>0$. Applying \leref{le.1}, we know
$\textsf{SINR}^{\text{MRC}}(\alpha)$ is concave. Hence,
$\log_2(1+\textsf{SINR}^{\text{MRC}}(\alpha))$ is also concave since function
$\log(1+x)$ is concave and nondecreasing \cite{bova11}. Therefore, its
super-level set $\mathcal{S}_{\phi}$ is convex. It follows that the
super-level set $\mathcal{S}_{\beta}$ of $R^{\text{MRC}}(\alpha)$ is convex
for each $\beta\ge 0$. The objective function is thus quasiconcave.
\end{proof}

Thanks to \leref{le.quasi}, we can find the optimal $\alpha$ by setting the
derivative of \eqref{eq.conmrcrate} with respect to $\alpha$ to 0. Efficient
one-dimensional searching algorithm such as Newton method or bisection
algorithm \cite{bova11}, can be adopted to find out the optimal $\alpha$.

\subsection{Case 2: $\rho_d$ is limited by $\rhomax$}

Define $\alpha_2:=1-\rhomax(T-K)/\rho T$, which is the root of $T-K=\rho
T\alpha/\rhomax+\rho T/\rhomax$ in $\alpha$. If $\alpha_2>\adag$, because of
the property P2 the optimal $(\alpha^*, T_d^*)$ must be on one of the two
lines given by i) $T_d=-\rho T\alpha/\rhomax+T, \alpha\in(\alpha_1,1)$,
$\alpha\in [\alpha_1, 1]$, and ii) $T_d=T-K$, $\alpha \in [\alpha_2,
\alpha_1]$. Along the line $T_d=T-K, \alpha\in(\alpha_1,1)$, the corresponding
function is decreasing in $\alpha$ because of the property P1. Also
considering P2, which implies that the optimal point in this case cannot
include $T_d<T-K$, we conclude that the point $(\alpha^*, T_d^*)=(\alpha_2,
T-K)$ is the global optimal solution of the problem.

\subsection{Case 3: Neither $\rho_d$ nor $\rho_{\tau}$ is not limited by
$\rhomax$}

If $\alpha_2<\adag<\alpha_1$, the optimal point is achieved at $(\alpha^*,
T_d^*)=(\adag, T-K)$, according to properties P1 and P2.

Summarizing what we have discussed so far, we have the following theorem.
\begin{theorem}
For the MRC receiver, set $\adag=1/2$ if $T_d=K$ and otherwise set $\adag$
according to \eqref{eq.astar} when $T_d=T-K$. Set $\alpha_1=\rhomax K/\rho T$
and set $\alpha_2= 1-\rhomax (T-K)/\rho T$. There are three cases: Case 1) If
$\alpha_1<\adag$, then $\alpha^*$ is given by the maximizer of
$R^{\text{MRC}}(\alpha)$ in \eqref{eq.conmrcrate}, and $T_d^*=-\rho T\alpha^*
/\rhomax + T$; Case 2) If $\alpha_2>\adag$ then $(\alpha^*, T_d^*)=(\alpha_2,
T-K)$; Case 3) If $\alpha_2<\adag<\alpha_1$, then $(\alpha^*, T_d^*)=(\adag,
T-K)$.
\end{theorem}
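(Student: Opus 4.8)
The plan is to reduce the two-dimensional problem to a one-dimensional maximization along the upper boundary of the feasible region, and then to locate the unconstrained maximizer $\adag$ relative to two breakpoints $\alpha_1$ and $\alpha_2$. First I would invoke property P2: since $\objfunc{\text{MRC}}$ is strictly increasing in $T_d$ for every fixed $\alpha$, an optimal pair must take $T_d$ as large as the constraints allow, so the optimum lies on the upper boundary $T_d^{\text{max}}(\alpha)=\min\{\,T-K,\ -\rho T\alpha/\rhomax+T\,\}$. This boundary is piecewise linear with its kink exactly at $\alpha=\alpha_1=\rhomax K/\rho T$, where the slanted line \eqref{eq.slab1} meets $T_d=T-K$; the horizontal piece $T_d=T-K$ is feasible for $\alpha\in[\alpha_2,\alpha_1]$ and the slanted piece for $\alpha\in[\alpha_1,1]$, where $\alpha_2=1-\rhomax(T-K)/\rho T$ is the point at which \eqref{eq.slab2} meets $T_d=T-K$. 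The problem is thereby reduced to maximizing the single-variable function $\alpha\mapsto R^{\text{MRC}}(\alpha,T_d^{\text{max}}(\alpha))$.

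Next I would use property P1 on the horizontal piece: $\alpha\mapsto R^{\text{MRC}}(\alpha,T-K)$ is concave with interior maximizer $\adag$ (equal to $1/2$ when $T_d=K$ and otherwise given by \eqref{eq.astar}), and the three cases of the theorem are precisely the three positions of $\adag$ relative to $[\alpha_2,\alpha_1]$. For Cases 2 and 3 I would dominate the entire slanted piece by a single chaining inequality that couples both properties: for any $\alpha>\alpha_1$ one has $T_d^{\text{max}}(\alpha)<T-K$, so P2 gives $R^{\text{MRC}}(\alpha,T_d^{\text{max}}(\alpha))\le R^{\text{MRC}}(\alpha,T-K)$, and P1 concavity gives $R^{\text{MRC}}(\alpha,T-K)\le R^{\text{MRC}}(\alpha',T-K)$, where $\alpha'=\adag$ in Case 3 (since $\adag<\alpha_1<\alpha$ lies on the decreasing side) and $\alpha'=\alpha_2$ in Case 2 (since $\adag<\alpha_2<\alpha$ makes the horizontal restriction decreasing on $[\alpha_2,\infty)$). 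Chaining shows that no point of the slanted piece beats the horizontal optimum, which is $(\adag,T-K)$ in Case 3 and $(\alpha_2,T-K)$ in Case 2.

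For Case 1 ($\alpha_1<\adag$) the concave horizontal restriction is increasing on $[\alpha_2,\alpha_1]$, so its best point is the right endpoint $\alpha_1$; but that point coincides with the left endpoint of the slanted piece, so the global maximum must be attained on the slanted piece $\alpha\in[\alpha_1,1]$. There I would appeal to \leref{le.quasi}: $R^{\text{MRC}}(\alpha)$ in \eqref{eq.conmrcrate} is quasiconcave, hence its unique stationary point, located by the one-dimensional Newton or bisection search, is the maximizer $\alpha^*$, with $T_d^*=-\rho T\alpha^*/\rhomax+T$.

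I expect the main obstacle to be the rigorous exclusion of the slanted-piece interior in Cases 2 and 3; the naive ``biconvex, so only local guarantees'' view is too weak, and the resolution is exactly the two-step chaining above, which is the only place where P1 and P2 are used jointly rather than separately. A secondary subtlety is that in Case 1 the restriction of $R^{\text{MRC}}$ to the slanted line is a product of a decreasing affine factor and a concave logarithmic factor and need not be concave, so quasiconcavity, not concavity, is what licenses the single-stationary-point search. Finally, I would verify the housekeeping that makes the three cases exhaustive and mutually exclusive: the ordering $\alpha_2\le\alpha_1$ and the fact, from Remark 1, that $\adag\in(0,1)$, so that the breakpoint comparisons are well posed.
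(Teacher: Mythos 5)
Your proposal is correct and follows essentially the same route as the paper: property P2 reduces the problem to the upper boundary of the feasible region, property P1 handles the horizontal piece $T_d=T-K$ via the position of $\adag$ relative to $[\alpha_2,\alpha_1]$, and Lemma~\ref{le.quasi} (quasiconcavity along the slanted line) resolves Case 1. Your explicit two-step chaining inequality in Cases 2 and 3 is in fact a sharper rendering of the paper's rather terse claim that ``considering P2, the optimal point cannot include $T_d<T-K$,'' but it is the same underlying argument, not a different method.
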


We also have similar results regarding the optimal energy allocation factor
$\alpha$ and training period $T_\tau$ for the ZF case. Due to the space limit,
the result is not included here.

We also remark that our results are applicable for any $M>K$. When $M\gg K$,
the system is known as a ``massive MIMO'' system. Our results offer optimal
training energy allocation and optimal training duration when there is a peak
power constraint in addition to the average power constraint.

\section{Numerical Results} \label{sec:simulations}

In this section, we compare the achievable rates between equal power
allocation scheme and our optimized one under average and peak power
constraints. In our simulations, we set $\rho_{\max}=1.2\rho$, $K=10$, and
$T=196$. We consider the following schemes: 1) MRC, which refers to the case
where MRC receiver is used and the same average power is used in both training
and data transmission phases \cite{nglm13}. 2) optimized MRC, which refers to the case where
MRC receiver is used, the training duration is $K$, and there is no peak power
constraint. 3) power-limited MRC, where MRC receiver is used, and both the
training duration and training energy are optimized under both the average and
peak power constraints. We will also consider the ZF variants of the above
three cases, namely ZF, optimized ZF, and power-limited ZF. The energy
efficiency is defined as $\eta^{\mathcal A}:=\objfunc{\mathcal A}/\rho$.

In Fig.~2, we show the achieved rates of various schemes as the number of
antennas increases. It can be seen that the optimized MRC (ZF) performs better
than the unoptimized MRC (ZF) as well as the peak-power limited MRC (ZF). In
Fig.~3, the energy efficiency is shown as a function of $\rho$. It can be seen
that there is an optimal average transmitted power for maximal energy
efficiency. It can also be seen that optimized schemes show a significant gain
when $\rho$ is small, since the power resource is precious. In Fig.~4, we show
the energy efficiency versus sum rate. It can be observed that the energy
efficiency is maximized at a certain rate. In particular, the optimized
schemes achieve higher energy efficiencies. Also from the simulations, we can
see that ZF performs better than MRC at high SNR, but worse when SNR is low.

\begin{figure}[htp]
\begin{minipage}[b]{1.0\linewidth}
  \centering
  \centerline{\includegraphics[width=7.6cm]{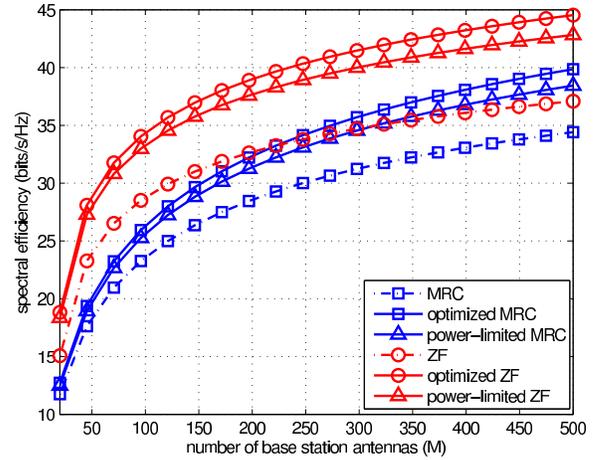}}
  \caption{Comparison between equally power allocation and the optimized fraction $\alpha$ in terms of the number of base station antennas, where
  $\rho=-5$dB.}\medskip\label{fig.image1}
\end{minipage}
\end{figure}
\begin{figure}[htp]
\begin{minipage}[b]{1.0\linewidth}
  \centering
  \centerline{\includegraphics[width=7.6cm]{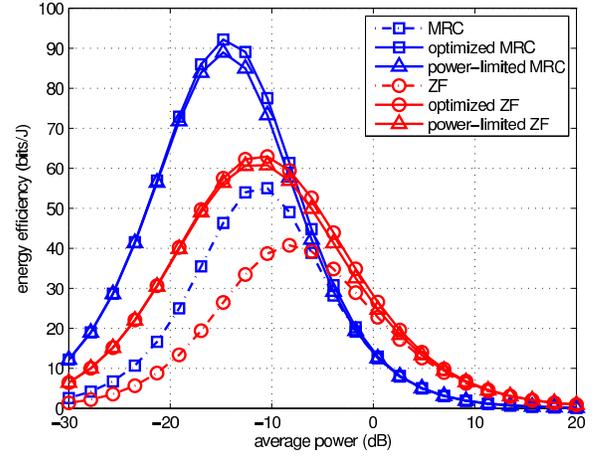}}
  \caption{Comparison of energy efficiency in terms of the transmitted power $\rho$, where $M=20$.}\medskip
\end{minipage}
\end{figure}
\begin{figure}[htp]
\begin{minipage}[b]{1.0\linewidth}
  \centering
  \centerline{\includegraphics[width=7.6cm]{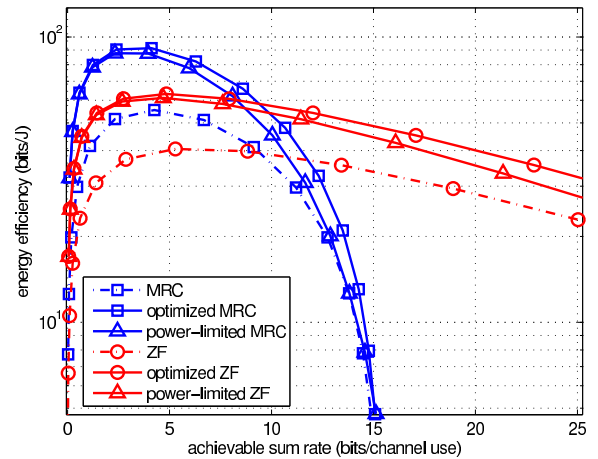}}
  \caption{Comparison of energy efficiency versus the spectral efficiency, where $M=20$.}\medskip
\end{minipage}
\end{figure}
\section{Conclusion}
\label{sec:conclusion}
In this paper, we considered a uplink MU-MIMO system with training and data
transmission phases. Two receivers were considered, namely MRC and ZF
receivers. Power allocation between training phase and data phase, and the
training duration were optimized such that an achievable rate was maximized.
Both average and peak power constraints were considered. We performed a
careful analysis of the convexity of the problem and derived optimal solution
either in closed form or in one case through a one-dimensional search for a
quasi-concave function. Our results were illustrated through numerical
examples, for some example system setups including those with a large number
of antennas at the base station.

\section{Appendix}

\subsection{Proof of Lemma 1} Replacing $\alpha$ as $x$ in
\eqref{eq.newmrcsinr}, we need to verify that the second derivative of
\eqref{eq.newmrcsinr} with respective to $x$ is negative \cite{bova11}. The
first derivative of $g(x)$ is
\begin{equation}
g'(x)=\frac{(1-a)x^2-2bx+b}{(x^2-ax-b)^2}
\end{equation}
where $1-a-b<0$, $b>0$ and $x\in(0,1)$. Then, take the second derivative of
$g(x)$, we have
\begin{equation}
g''(x)=\frac{2}{(x^2-ax-b)^3}\underbrace{\big((a-1)x^3+3bx^2-3bx+ab+b^2\big)}_{f(x)}
\end{equation}
From Remark 1 and $b>0$, we know that $(x^2-ax-b)^3<0$. The goal of the proof
becomes to show $f(x)>0$.

Checking the boundary of $f(x)$, we know that
\begin{equation}
f(0)=ab+b^2=b(b+a)>b>0,
\end{equation}
\begin{equation}
f(1)=ab+b^2=a-1+ab+b^2=(a+b-1)(b+1)>0.
\end{equation}
Next, we need to consider the monotonicity of the function during the interval
$x\in(0,1)$. Take the derivative of $f(x)$, we get
\begin{align}\label{eq.fxderiv}
f'(x)=&3(a-1)x^2+6bx-3b
\nonumber \\
&=3(a-1)(x^2+\frac{2b}{a-1}x-\frac{b}{a-1}),
\end{align}
which is a quadratic function.

When $a=1$, $f'(x)=6bx-3b=3b(2x-1)$. The function is decreasing until $x=1/2$
and increasing afterwards.
Since
\begin{equation}
f(\frac{1}{2})=\frac{1}{4}b+b^2>0,
\end{equation}
it can be deduced that $f(x)>0$.

When $a\ne 1$, we know that $f'(1)=3(a+b-1)>0$, $f'(0)=-3b$, meaning that the
function $f(x)$ is decreasing first and increasing after the minimum point.

Here, we need to verify the minimum value of $f(x^*)$ is always greater than
0. According to \eqref{eq.fxderiv}, the minimum point given by the root of
$f'(x^*)=0$ is
\begin{equation}\label{eq.minpoint}
x^*=-\frac{b}{a-1}+\sqrt{\frac{b(a+b-1)}{(a-1)^2}},
\end{equation}
since $a+b>1$ and $b>0$. Substituting \eqref{eq.minpoint} into $f(x)$, we have
\begin{align}
f(x)= & x((a-1)x^2+2bx-b)+bx^2-2bx+ab+b^2
\nonumber \\
\mathop{=}\limits^{(a)} & bx^2+\frac{2b^2}{a-1}x-\frac{b^2}{a-1}-\frac{2b^2}{a-1}x+\frac{b^2}{a-1}
\nonumber \\
&-2bx+ab+b^2
\nonumber \\
\mathop{=}\limits^{(a)} &-\frac{2b(b+a-1)}{a-1}x+\frac{ab(a+b-1)}{a-1}
\nonumber \\
= &\frac{b(a+b-1)}{a-1}\underbrace{(\frac{2b}{a-1}-2\sqrt{\frac{b(a+b-1)}{(a-1)^2}}+a)}_{h(x)}
\end{align}
where (a) is according to $f'(x^*)=0$.

For $a-1>0$,
\begin{align}
h(x)=&\frac{2b}{a-1}-\frac{\sqrt{b(a+b-1)}}{a-1}+a
\nonumber \\
=&\frac{2}{a-1}\frac{b^2-b(a+b-1)}{b+\sqrt{b(a+b-1)}}+a \\
\mathop{>}\limits^{(b)}&a-\frac{2b}{b+\sqrt{b^2}}>0
\end{align}
where (b) is based on $a-1>0$. Therefore, $f(x)>0$.

For $a-1<0$,
\begin{align}
h(x)&=\frac{2b}{a-1}+\frac{2\sqrt{b(a+b-1)}}{a-1}+a\\
&\mathop{<}^{(c)}\frac{2(1-a)}{a-1}+a<0,
\end{align}
where (c) is due to $b>1-a$. Hence, $f(x)>0$.

\subsection{Proof of Lemma 2} The derivative of $f(x)=x\ln(1+a/(b+cx))$, where
$a,b,c,x>0$, is
\begin{equation}
f'(x)=\ln(1+\frac{a}{cx+b})-\frac{acx}{(cx+a+b)(cx+b)}
\end{equation}
It is clear that $\lim_{x\to\infty}f'(x)=0$. If we can verify that the
function $f'(x)$ is monotonically decreasing, then $f'(x)$ is always positive.
Hence, we take the second derivative of $f'(x)$, and get
\begin{equation} \label{eq.secderivtd}
f''(x)=-\frac{abc^2x+ac^2(a+b)x+2ac(a+b)b}{[(cx+b)(cx+a+b)]^2}<0,
\end{equation}
since $a,b,c,x>0$. This means that $f'(x)$ is decreasing. Therefore, $f'(x)$
is always positive, i.e., $f(x)$ is an increasing and concave function.

\noindent
\emph{Acknowledgement:} The work in this paper was support in part by NSF
Grants No.~1218819 and No.~1308419.

\bibliographystyle{IEEE-unsorted}
\bibliography{refs}
\end{document}